\newcommand{\norm}[1]{\left\lVert#1\right\rVert}
\newtheorem{proposition}{Proposition}
\newtheorem{assumption}{Assumption}
\title{\LARGE \bf
A Semidefinite Programming Approach to Discrete-time\\ Infinite Horizon Persistent Monitoring}
\author{Samuel C. Pinto$^1$, Sean B. Andersson$^{1,2}$, Julien M. Hendrickx$^3$, and Christos G. Cassandras$^{2,4}$
\\
$^1$Dept. of Mechanical Engineering, $^2$Division of Systems Engineering,\\ $^4$Dept. of Electrical and Computer Engineering \\
Boston University, Boston, MA 02215, USA \\
$^3$ICTEAM Institute, UCLouvain, Louvain-la-Neuve 1348, Belgium \\
\{samcerq,sanderss,cgc\}@bu.edu, julien.hendrickx@uclouvain.be
\thanks{This work was supported in part by NSF under grants ECCS-1931600, DMS-1664644, CNS-1645681, and CMMI-1562031, by ARPA-E under grant DE-AR0001282, by AFOSR under grant FA9550-19-1-0158,  and by the MathWorks. The work of J. Hendrickx was supported by the “RevealFlight” Concerted Research Action (ARC) of the Federation Wallonie-Bruxelles, by the Incentive Grant for Scientific Research (MIS) “Learning from Pairwise Comparisons” of the F.R.S.-FNRS.
}
}
\begin{document}

\maketitle
\thispagestyle{empty}
\pagestyle{empty}

\begin{abstract}
We investigate the problem of persistent monitoring, where a mobile agent has to survey multiple targets in an environment in order to estimate their internal states. These internal states evolve with linear stochastic dynamics and the agent can observe them with a linear observation model. However, the signal to noise ratio is a monotonically decreasing function of the distance between the agent and the target. The goal is to minimize the uncertainty in the state estimates over the infinite horizon. We show that, for a periodic trajectory with fixed cycle length, the problem can be formulated as a set of semidefinite programs. We design a scheme that leverages the spatial configuration of the targets to guide the search over this set of optimization problems in order to provide efficient trajectories. Results are compared to a state of the art approach and we obtain improvements of up to 91\% in terms of cost in a simple scenario, with much lower computational time.
\end{abstract}

\section{INTRODUCTION}
\label{sec:intro}
We study the problem of persistent monitoring of a finite set of targets by a mobile agent, where each target has an internal state that evolves over time with some degree of uncertainty in its dynamics. The agent can observe these internal states when it is close to a given target, however, in order to monitor all the targets, it needs to move around the environment and visit each one of them infinitely often. This paradigm finds applications in a wide range of domains, such as observing multiple nanoparticles using a confocal microscope \cite{shen2010tracking}, ocean temperature monitoring \cite{lan2016rapidly}, urban surveillance using drones \cite{kim2018designing}, and pipeline inspection \cite{ostertag2019robust}.

This problem is closely related to the Traveling Salesman Problem (TSP) \cite{applegate2006traveling}, where, given a set of targets (possibly constrained to a graph-based structure), the goal is to find a cycle in which the agents efficiently visit all the targets in order to minimize the traveled distance or total travel time. The major difference between the TSP and the problem we are dealing with in this paper is that the optimization goal we consider is to minimize the uncertainty rather than distance. We emphasize that due to the dynamic nature of uncertainty, we cannot model the agent as being embedded in a graph with fixed cost on the edges, as assumed in TSP.

In the realm of persistent monitoring, significant previous work has been done. In \cite{lan2016rapidly}, a variant of the Rapidly Exploring Random Trees (RRT) algorithm denoted Rapid Random Cycles (RRC) was designed for cyclic discrete time persistent monitoring while \cite{lan2014variational} proposed an optimal control approach for the continuous time version of the persistent monitoring problem that relied on a solution of the two-point boundary value problem resulting from a Hamiltonian analysis. However, the solution of the two-point boundary value problem is numerically challenging and computationally expensive. 

Our group has also approached this problem from different point of views. Some of the initial work considered a simple uncertainty metric, where the target uncertainty grows or decreases linearly \cite{cassandras2013optimal}. More recent work has studied this problem from the assumption that the internal state of the target evolves according to linear, stochastic dynamics with linear observations, but corrupted with noise \cite{pinto2020multidimensional}. However, these works approached the problem from a continuous-time perspective. The trajectory of the agents was parameterized using a finite number of parameters, and then gradient descent was used to optimize over these parameters, leading to scalable solutions. The trajectories often converged to local optima and provided no insight on how far from global optimality the solutions were. To best of the authors knowledge, the only work that includes a notion of global (asymptotic probabilistic) optimality in persistent monitoring of uncertain states is \cite{lan2016rapidly} and traditional tools with global guarantees, such as dynamic programming, are computationally prohibitive even for a small number of targets. In this work, we intend to have more than a local notion of optimality while retaining the ability to generate efficient trajectories for problems with a small to moderate number of targets within a reasonable computational time. 

In this paper, however, we limit ourselves to a single agent and a discrete time model and formulate the problem in a way that both the computation of the steady state uncertainty and the local optimization of the trajectory can be framed as a single optimization problem, a semidefinite program (SDP). The formulation and solution of this joint optimization problem is the main contribution of this paper. This contrasts with our previous approach \cite{pinto2020multidimensional} where, in a gradient descent scheme, the steady state uncertainty had to be computed through a computationally costly algorithm.
In the present work we benefit from efficient and reliable SDP solvers and are able to quickly solve a local version of the persistent monitoring problem.
Moreover, we are not limited to local optimality, as we have also embedded this local SDP-based optimizer into a higher level algorithm that searches globally for different periodic trajectories. This higher level scheme leverages the spatial distribution of the targets and then feeds the lower level optimization with configurations that will lead to feasible schedules. Due to the infinite number of candidate trajectories, we still are not able to guarantee global optimality. However, simulation results show that the approach proposed in this paper is able to efficiently handle problems with a small to moderate number of targets, providing trajectories with good performance even in the initial iterations of the higher level algorithm and also significantly improving them as it runs longer. Moreover, trajectories generated with the approach here proposed give a significant reduction (91\%) in terms of estimation error when compared to RRC \cite{lan2016rapidly}, while also showing significant computational time reduction. 

\section{Problem Formulation}
\label{sec:formulation}

Consider an environment with $N$ target locations that should be monitored. Each target $i$ has an internal state $\phi_i\in\mathbb{R}^{L_i}$ that continuously evolves over time.
An agent can move and sense these targets. We assume that the dynamics of the targets and observations of its internal state by the agent are given according to the following discrete-time model:
\begin{subequations}
\label{eq:model}
\begin{equation}
    \phi_i(k+1)=A_i\phi_i(k) + w_i(k),
\end{equation}
\begin{equation}
    z_i(k)=H_i(k) \phi_i(k) + v_i(k),
\end{equation}
\end{subequations}
where $w_i(k)$ and $v_{i}(k)$ are zero mean, mutually independent white Gaussian processes with constant covariance matrices $Q_i$ and $R_i$, respectively.

Assume that at a given time step $k$, the agent has a position $s(k)$ and that the agent can deterministically control its position according to the following model
\begin{equation}
    \label{eq:agent_kinematics}
    s(k+1) = s(k) + u_k,\  \norm{u_k} \leq u_{\max}.
\end{equation}

We note that the choice of these simple dynamics is made for ease of presentation. Extending to more complicated agent dynamics is straightforward. Motivated by the fact that real sensors (such as sonar, cameras and lidars) normally have a finite range and that their observation quality usually decays as the agent moves farther from the sensing target, we assume the following model for $H_i(k)$:
\begin{equation}
\label{eq:sensing_model}
\begin{aligned}
    H_i(k) &= \sqrt{\gamma_i(k)}H_{i,\max},\\
    \gamma_i(k) &= \begin{cases}
    \left(1-\frac{(s(k)-x_i)^2}{r_i^2}\right),&\ \norm{s(k)-x_i} \leq r_i,\\
    0,&\ \text{otherwise}.
    \end{cases}
\end{aligned}
\end{equation}
In \eqref{eq:sensing_model}, $x_i$ is the position of target $i$ and $H_{i,\max}$ is a constant matrix.

This particular structure of $H_i(k)$ captures the fact that the power of the signal decays as the agent moves farther from the target, while the noise power stays constant. If the distance between agent and target is larger than $r_i$, the intensity of the signal is zero, which is equivalent to not sensing at all.
The particular quadratic decay was chosen due to the fact that it can be easily incorporated into an SDP. We plan to study extensions to more general decay shapes in future work.

When estimating the states, the goal is to find an unbiased estimator $\hat{\phi}_i(k)$ that minimizes the mean squared estimation error over an infinite time horizon. Letting ${\Sigma}_i(k)$ denote the covariance matrix of the estimator $\hat{\phi}_i(k)$ and noting that $E[(\hat{\phi}_i-\phi_i)^T(\hat{\phi}_i-\phi_i)]=\text{tr(}\Sigma_i\text{)}$, we define the cost $C$ of a particular infinite horizon trajectory of the agent as:
\begin{equation}
    \label{eq:cost}
    C= \lim_{M \rightarrow \infty} \frac{1}{M}\sum_{j=1}^M\sum_{i=1}^N\text{tr}(\Sigma_i(j))
\end{equation}
when the above limit exists. When it does not exist, the cost is defined as infinity.

Since both the dynamics and the observations are linear and the process and observation noises are Gaussian and uncorrelated, with the additional assumption that the distribution of $\phi_i(0)$ is Gaussian and uncorrelated with $w_i(k)$ and $v_i(k)$, we know that the optimal estimator is a Kalman Filter \cite{thrun2002probabilistic}. 
Therefore, in order to efficiently approach the persistent monitoring problem, we only need to design a trajectory that will influence the sensing matrices $H_i(k)$ in order to reduce the overall uncertainty. Note that $H_i(k)$ itself is directly related to the covariance matrix through the Kalman filter propagation equations.

In the framework of persistent monitoring where targets have to be visited infinitely often, a very natural assumption is that targets will follow a periodic schedule, as already exploited in some of our previous work on the continuous time version of the problem \cite{pinto2020monitoring,pinto2020multidimensional}. Periodic schedules can approximate arbitrarily well the cost of an optimal schedule \cite{zhao2014optimal} and they naturally provide an upper bound to the inter-visit time of targets. 
One particularly interesting property of periodic schedules is that, under some very natural assumptions, the covariance matrix of each of the targets converges to a steady state periodic solution. Therefore, we only need to design one period of the agent trajectory in order optimize the performance of the system over long horizons.

We make the following assumption that ensures that the entire internal state can indeed be estimated from the observations:

\begin{assumption}
    The pair $(A_i,H_{i,\max})$ is observable, for every $i\in\{1,...,M\}$.
\end{assumption}

Under this assumption, it can be shown that, if every target is visited at least once in a period, the covariance matrix will converge to a unique globally attractive steady state solution for any initial conditions $\Sigma_i(0)$, even if the internal state dynamics, captured by the matrix $A_i$, are unstable (the proof is a discrete time version of Prop. 2 in \cite{pinto2020monitoring}, and is omitted here due to space limitations). Therefore, assuming a cyclic trajectory of period $\tau$ where all the targets are visited,
\begin{equation}
    \label{eq:one_period_infinity_cost}
    C=\lim_{M \rightarrow \infty} \frac{1}{M}\sum_{m=1}^M\sum_{i=1}^N\text{tr}(\Sigma_i(m)) = \frac{1}{\tau}\sum_{k=1}^\tau\sum_{i=1}^N\text{tr}(\bar{\Sigma}_i(k))
\end{equation}
where $\bar{\Sigma}_i(k)$ is the steady state covariance matrix and the indexes $i$ and $k$ refer respectively to the index of the target and index of the time step within the cycle. The next section will discuss how to frame the computation of this steady-state covariance matrix as an optimization problem, that later will be used to jointly optimize the cost as expressed in \eqref{eq:one_period_infinity_cost} and the agent trajectory $s(k)$ as given in \eqref{eq:agent_kinematics}.

\section{An optimization approach to computing the infinite horizon cost}
\label{sec:ricatti_eq_opt}
In order to jointly optimize the trajectory and compute the infinite horizon cost using SDPs, we write the Kalman filter in a different format, known as the information filter. We first briefly recall the well known relationship between the Kalman filter and the {information filter. Then we develop a scheme to compute the infinite horizon cost of a periodic trajectory using an SDP whose optimal solution satisfies the information filter equations.}
\subsection{Information Filter}
Recall that the Kalman filter equations can be written in two steps (prediction and update) \cite{thrun2002probabilistic}. The covariance update in the prediction is given by
\begin{equation}
    {\Sigma}_i(k|k-1) = {A_i}{\Sigma}_i(k-1|k-1){A_i}^T+Q_i
\end{equation}
and in the update step by
\begin{multline}
    {\Sigma}_i(k|k) = {\Sigma}_i(k|k-1)-{\Sigma}_i(k|k-1){H}^T_i(k)\\\times(
    {H}_i(k){\Sigma}_i(k|k-1){H}_i^T(k)+{R}_i)^{-1}H_i(k){\Sigma}_i({k|k-1})
\end{multline}
where ${\Sigma}_i(k|k) = \Sigma_i(k)$ and $\Sigma_i(k|k-1)$ are the covariances at time $k$ using information up to time $k$ and $k-1$ respectively. For details, see, e.g. \cite{anderson2012optimal,thrun2002probabilistic}. Before moving to the information filter, we state the following assumption:
\begin{assumption}
    $Q_i$, $R_i$ and the initial covariance matrix $\Sigma_i(0)$ are positive definite, for every $i\in\{1,...,N\}$.
\end{assumption}

Under this assumption, we know that the covariance matrices ${\Sigma}_i(k|k-1)$ and ${\Sigma}_i(k|k)$ are positive definite. Let us define ${P}_i(k|k-1)={\Sigma}_i^{-1}(k|k-1)$ and ${P}_i(k|k)={\Sigma}_i^{-1}(k|k)$. 
Using the matrix inversion lemma in the prediction step \cite{thrun2002probabilistic} yields
\begin{multline}
    P_i(k|k-1) = {Q}_i^{-1}\\-{Q}_i^{-1}A_i(Q_i^{-1}+A^T_iP_i({k-1|k-1})A_i)^{-1}A_i^TQ_i^{-1}.
\end{multline}
Analogously, using the matrix inversion lemma on the update step leads to
\begin{equation}
    P_i({k|k}) = P_i({k|k-1})+H^T_i(k)R_iH_i(k).
\end{equation}
Merging both steps, we get the following recursion:
\begin{multline}
    \label{eq:information_filter_recursion}
    P_i(k|k) = {Q}_i^{-1}+H^T_i(k)R_iH_i(k)\\-{Q}_i^{-1}A_i(Q_i^{-1}+A^T_iP_i({k-1|k-1})A_i)^{-1}A_i^TQ_i^{-1}, 
\end{multline}
which is the well known information filter recursion \cite{anderson2012optimal} that we will use from this point forward in the paper. The information filter is optimal, since it consists simply of a rearrangement of the Kalman filter equations, where instead of propagating directly the covariance, the inverse of the covariance is propagated.

\subsection{Cyclic Schedules and the algebraic Ricatti equation}
Equation \eqref{eq:one_period_infinity_cost} shows that the average steady state uncertainty over a single cycle is equal (in the limit) to the mean squared estimation error over a very long time horizon. Therefore, computing these steady state covariance matrices is an essential part of solving the persistent monitoring problem. In this subsection, we discuss one method for computing $\bar{\Sigma}_i(k)$. We pick this specific method due to the fact that it can be easily integrated in the SDP framework that we will explore in the next subsection. We take an approach similar to \cite{fujimoto2016periodic}, where the steady state covariance is computed using a single augmented algebraic Ricatti equation (ARE). We point out that we cannot directly use the results in \cite{fujimoto2016periodic} because they use the Kalman filter in its standard form and not the information version. 
Therefore, in order to use an ARE to compute the steady state behavior of the system, we define the following augmented inverse covariance
\begin{equation}
    \tilde{{P}}_{i,k} = \begin{bmatrix}
    \bar{P}_{i}(k) &\cdots & 0\\
    \vdots & \ddots & \vdots\\
    0 & \cdots & \bar{P}_{i}(k+\tau-1)
    \end{bmatrix},
\end{equation}
{with $\bar{P}_i(k)=\bar{\Sigma}_i^{-1}(k)$, and augmented parameters $\tilde{\Lambda}_{i}=\text{diag}(A_i,...,A_i)$, $\tilde{\Psi}_{i}=\text{diag}(Q_i,...,Q_{i})$, $\tilde{H}_{i,k}=\text{diag}(H_{i}(k),...,H_{i}(k+\tau-1))$, $\tilde{R}_i=\text{diag}(R_i,...,R_i)$. The recursion in \eqref{eq:information_filter_recursion} can be rewritten as:}
\begin{multline}
    \label{eq:augmented_recursion}
    \tilde{P}_{i,k} = \tilde{\Psi}^{-1}_i+\tilde{H}_{i,k}^T\tilde{R}_i\tilde{H}_{i,k}\\-\tilde{\Psi}^{-1}_i\tilde{\Lambda}_i(\tilde{\Psi}^{-1}_i+\tilde{\Lambda}^T_i\tilde{P}_{i,k-1}\tilde{\Lambda}_i)^{-1}\tilde{\Lambda}^T_i\tilde{\Psi}^{-1}_i.
\end{multline}
For ensuring periodicity, we require that $P_i(k+\tau)=P_i(k)$, therefore,
\begin{equation}
     \tilde{P}_{i,k+1}= J\tilde{P}_{i,k}J^T ,\ J=\begin{bmatrix}
    0_{L_i\times (N-1)L_i} & I_{L_i\times L_i}\\
    I_{(N-1)L_i\times(N-1)L_i} & 0_{(N-1)L_i\times L_i}
    \end{bmatrix}.
\end{equation}
Defining $\tilde{Q}^{-1}_i = (J^T\tilde{\Psi}_iJ) ^{-1}$ and $\tilde{A}_i=J^{-T}\tilde{\Lambda}_i$ and substituting into \eqref{eq:augmented_recursion} we get the following algebraic Riccati equation for computing $\tilde{P}_{i,k}$ for each of the targets $i$:
\begin{multline}
    \label{eq:information_filter_algebraic_equation}
    \tilde{Q}^{-1}_i-\tilde{P}_{i,k}+\tilde{H}^T_{i,k}\tilde{R}_i\tilde{H}_{i,k}\\-\tilde{Q}_i^{-1}\tilde{A}_i(\tilde{P}_{i,k}+\tilde{A}_i^T\tilde{Q}_i^{-1}\tilde{A}_i)^{-1}\tilde{A}_i^T\tilde{Q}_i^{-1}=0.    
\end{multline}

\subsection{Solving the ARE as an SDP}
Even though the most efficient methods for solving AREs do not rely on SDPs, in the path for jointly solving the ARE and optimizing the trajectory, we first describe how to cast the solution of the ARE as an SDP. This will allow us to benefit from the efficient solvers available for SDPs and from their convexity properties in order to efficiently approach the persistent monitoring problem. Moving in this direction, we first introduce a relaxed version of \eqref{eq:information_filter_algebraic_equation}, where equality is replaced by inequality, with the goal that, in the optimal solution of the optimization, the constraint will be tight and equality will hold:
\begin{multline}
    \label{eq:information_filter_inequality_algebraic_equation}
    \tilde{Q}^{-1}_i-\Pi_{i}+\tilde{H}^T_{i,k}\tilde{R}_i\tilde{H}_{i,k}\\-\tilde{Q}_i^{-1}\tilde{A}_i(\Pi_{i}+\tilde{A}_i^T\tilde{Q}_i^{-1}\tilde{A}_i)^{-1}\tilde{A}_i^T\tilde{Q}_i^{-1}\succcurlyeq0,
\end{multline}
where $\succcurlyeq0$ denotes that the matrix is positive semi-definite and $\Pi_i$ is a variable for which we want $\Pi_i=\tilde{P}_{i,k}$ as in \eqref{eq:information_filter_algebraic_equation} in the optimal solution of the optimization. {We recall that whenever every target is visited, \eqref{eq:information_filter_algebraic_equation} will have a solution and hence the constraint \ref{eq:information_filter_inequality_algebraic_equation} will be feasible.} Using the Schur complement \cite{balakrishnan1995connections}, this inequality can be written as:
\begin{equation}
    \label{eq:information_filter_inequality}
    \begin{bmatrix}
    \tilde{Q}^{-1}_i-{\Pi}_{i}+\tilde{H}^T_{i,k}\tilde{R}_i\tilde{H}_{i,k} & \tilde{Q}_i^{-1}\tilde{A}_i\\
    \tilde{A}_i^T\tilde{Q}_i^{-1} & {\Pi}_{i}+\tilde{A}_i^T\tilde{Q}_i^{-1}\tilde{A}_i
    \end{bmatrix}\succcurlyeq 0.
\end{equation}
 We also define an upper bound $\Gamma_i$ on the covariance matrix $\Gamma_i \succcurlyeq \Pi_i$, which in the optimal solution will coincide with the covariance matrix. Using Schur's complement, this upper bound can be expressed as:
\begin{equation}
    \label{eq:upper_bound_P_inverse}
    \begin{bmatrix}
    \Gamma_i & I \\ I & {\Pi}_{i}
    \end{bmatrix} \succcurlyeq 0.
\end{equation}

Now, we show that using an SDP, we can compute the exact solution of the information filter and that the relaxations we proposed will indeed be tight in an optimal solution. Moreover, we show that the cost function of the optimization is equal to the trace of the augmented steady state covariance matrix.
Inspired by \cite{balakrishnan1995connections}, where it is shown that the LQR Ricatti equation can be solved as an SDP, we give the following proposition:
\begin{proposition}
\label{prop:exact_solution_information_filter}
If the pair $(\tilde{A}_i,\tilde{H}_{i,k})$ is observable and $Q_i$ and $R_i$ are positive definite, then the optimal solution of the following SDP is such that $\Pi_i^*=\tilde{P}_{i,k}$ is a solution of the ARE \eqref{eq:information_filter_algebraic_equation} and $\Gamma_i^*=(\Pi_i^*)^{-1}$.

\begin{equation}
    \label{eq:optimization_solve_riccati_equation}
  \begin{aligned}
    & \underset{\Gamma_i,\Pi_i}{\text{min}} &&
      \text{tr}(\Gamma_i) \\
    & \text{s.t. }&&\eqref{eq:information_filter_inequality},\qquad  \eqref{eq:upper_bound_P_inverse},\qquad \Gamma_i,\ \Pi_i \succcurlyeq 0.
  \end{aligned}
\end{equation}
\end{proposition}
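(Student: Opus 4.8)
The plan is to exhibit $\tilde P_{i,k}$ as the largest feasible $\Pi_i$ in the semidefinite order and then read off the claim. First I would untangle the two LMIs. Constraint \eqref{eq:upper_bound_P_inverse} in fact forces $\Pi_i \succ 0$: if $\Pi_i v = 0$, then the block vector $(0^T,v^T)^T$ makes the quadratic form of the positive-semidefinite matrix in \eqref{eq:upper_bound_P_inverse} vanish, hence lies in its kernel, which yields $v=0$. Once $\Pi_i \succ 0$, the $(2,2)$ block $\Pi_i + \tilde A_i^T\tilde Q_i^{-1}\tilde A_i$ of \eqref{eq:information_filter_inequality} is positive definite, so the Schur-complement step relating \eqref{eq:information_filter_inequality} to \eqref{eq:information_filter_inequality_algebraic_equation} is reversible: a feasible $\Pi_i$ is precisely a positive-definite matrix with $\Pi_i \preceq \mathcal F_i(\Pi_i)$, where $\mathcal F_i(\Pi) := \tilde Q_i^{-1} + \tilde H_{i,k}^T\tilde R_i\tilde H_{i,k} - \tilde Q_i^{-1}\tilde A_i(\Pi + \tilde A_i^T\tilde Q_i^{-1}\tilde A_i)^{-1}\tilde A_i^T\tilde Q_i^{-1}$ is the augmented information-filter update of \eqref{eq:augmented_recursion}. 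Moreover, for fixed feasible $\Pi_i$, \eqref{eq:upper_bound_P_inverse} is equivalent to $\Gamma_i \succcurlyeq \Pi_i^{-1}$, so $\tr(\Gamma_i) \ge \tr(\Pi_i^{-1})$ with equality iff $\Gamma_i = \Pi_i^{-1}$ (the trace of the positive-semidefinite $\Gamma_i-\Pi_i^{-1}$ is zero only if it vanishes). Hence the SDP is equivalent to $\min\{\,\tr(\Pi_i^{-1}) : \Pi_i \succ 0,\ \Pi_i \preceq \mathcal F_i(\Pi_i)\,\}$ with $\Gamma_i^* = (\Pi_i^*)^{-1}$ at any optimizer.

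The heart of the proof is to show $\Pi_i \preceq \tilde P_{i,k}$ for every feasible $\Pi_i$. The map $\mathcal F_i$ is monotone on positive-definite matrices: $X \mapsto (X + \tilde A_i^T\tilde Q_i^{-1}\tilde A_i)^{-1}$ is order-reversing there, and conjugating by $\tilde Q_i^{-1}\tilde A_i$ and negating restores the order; moreover each iterate stays positive definite, being the inverse of a covariance bounded below by a positive-definite matrix. Iterating from a feasible $\Pi_i$, the inequality $\Pi_i \preceq \mathcal F_i(\Pi_i)$ and monotonicity give a nondecreasing chain $\Pi_i \preceq \mathcal F_i(\Pi_i) \preceq \mathcal F_i^2(\Pi_i) \preceq \cdots$. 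This chain is exactly the information filter initialized at $\Pi_i$, which—under the hypotheses of the proposition, which place us in the setting of the globally attractive steady state discussed just after Assumption~1 (observability of $(\tilde A_i,\tilde H_{i,k})$ encodes that target $i$ is visited in the cycle, and $Q_i,R_i \succ 0$)—converges to the unique steady-state solution $\tilde P_{i,k}$ of \eqref{eq:information_filter_algebraic_equation}. Since the semidefinite order is closed under limits, $\Pi_i \preceq \tilde P_{i,k}$.

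Finally, since $X \mapsto \tr(X^{-1})$ is order-reversing on positive-definite matrices, $\tr(\Pi_i^{-1}) \ge \tr(\tilde P_{i,k}^{-1})$ for every feasible $\Pi_i$, with equality iff $\Pi_i^{-1} = \tilde P_{i,k}^{-1}$, i.e.\ $\Pi_i = \tilde P_{i,k}$. Under Assumption~2 the steady-state covariance is positive definite, so $\tilde P_{i,k} \succ 0$ and the pair $(\tilde P_{i,k}^{-1},\tilde P_{i,k})$ is feasible (it solves the ARE with equality, hence satisfies \eqref{eq:information_filter_inequality}, and makes the Schur complement in \eqref{eq:upper_bound_P_inverse} equal to $0$), so the bound is attained. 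Therefore the minimum is attained, the optimizer is unique, $\Pi_i^* = \tilde P_{i,k}$ solves \eqref{eq:information_filter_algebraic_equation}, and $\Gamma_i^* = (\Pi_i^*)^{-1}$.

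The step I expect to be the main obstacle is the monotonicity argument of the second paragraph: one must be careful that the iterates remain positive definite and that the monotone chain genuinely converges to the ARE solution rather than merely being order-bounded—this is precisely where the previously cited global-attractivity result (the discrete-time analogue of Prop.~2 of \cite{pinto2020monitoring}) carries the weight—and, secondarily, one must check that the hypotheses here, notably observability of the augmented pair, really do match the conditions under which that result holds.
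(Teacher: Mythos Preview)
Your argument is correct and proceeds quite differently from the paper. The paper argues via SDP duality: it writes the dual of \eqref{eq:optimization_solve_riccati_equation}, verifies strict dual feasibility (Slater) and primal feasibility (via existence of the ARE solution), and then reads off both $\Gamma_i^*=(\Pi_i^*)^{-1}$ and the ARE \eqref{eq:information_filter_algebraic_equation} from complementary slackness, in the style of \cite{balakrishnan1995connections}. You instead work entirely on the primal side: after reducing the SDP to $\min\{\tr(\Pi_i^{-1}) : \Pi_i\succ 0,\ \Pi_i\preceq\mathcal F_i(\Pi_i)\}$, you use monotonicity of the Riccati map $\mathcal F_i$ to show that any feasible $\Pi_i$ generates a nondecreasing iteration converging to the unique positive-definite fixed point $\tilde P_{i,k}$, hence $\Pi_i\preceq\tilde P_{i,k}$, which pins down the optimizer. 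Your route avoids duality theory and yields the extra structural statement that $\tilde P_{i,k}$ dominates every feasible $\Pi_i$ (and hence uniqueness of the optimizer); the paper's route is the standard LMI technique and extends more transparently to the joint trajectory optimization of Proposition~\ref{prop:entire_problem_sdp}, where complementary slackness is again the tool. One small remark on the convergence step you flag: you do not actually need the full global-attractivity result---your chain is nondecreasing and bounded above by $\tilde Q_i^{-1}+\tilde H_{i,k}^T\tilde R_i\tilde H_{i,k}$ (the subtracted term in $\mathcal F_i$ is positive semidefinite), so it converges to a positive-definite fixed point of $\mathcal F_i$, and uniqueness of that fixed point under observability and $Q_i,R_i\succ 0$ finishes the argument.
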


The proof of Prop. 1 is given in Appendix 1. In the optimal solution of the SDP, $\Gamma_i^*$ is the augmented covariance matrix, i.e. $\Gamma_i^*=\tilde{P}_{i,k}^{-1}$. Therefore, minimizing $\text{tr}(\Gamma_i)$ for all the targets is equivalent to minimizing $\tau^{-1}\sum_{i=1}^{\tau}\text{tr}(\bar{\Sigma}_{i}(k))$, which is the optimization objective in the persistent monitoring problem, as expressed in \eqref{eq:one_period_infinity_cost}. Thus, solving \eqref{eq:optimization_solve_riccati_equation} gives us the squared estimation error of a single target, given an agent trajectory. Although Prop. \ref{prop:exact_solution_information_filter} is not directly used to solve the Persistent Monitoring problem, it gives important insight into Prop. \ref{prop:entire_problem_sdp}, which is the main result of this paper.

\section{Optimization of Persistent Monitoring Schedules}
\label{sec:entire_opt}

In order to obtain an efficient persistent monitoring schedule, one has to design an agent trajectory that will lead to low uncertainty in the estimation. Therefore, in this section, we give a procedure to jointly optimize the steady state uncertainty \eqref{eq:one_period_infinity_cost} and the trajectory of the agent. If we knew in advance when $\norm{s(k)-x_i}$ was larger than $r_i$, then \eqref{eq:sensing_model} would be linear with the squared distance between the agent and target $i$ ($d_i^2(k)$) at every time for every $i$, and the problem would be an SDP. However, since we do not know whether or not $\norm{s(k)-x_i}$ is larger than $r_i$, a set of SDPs needs to be solved in order to obtain the optimal trajectory with that period. Therefore, we propose in this section a two-step procedure, where in the higher level, an algorithm produces sequences of targets to be visited by the agent and determines which of the modes of \eqref{eq:sensing_model} is active in each time step. The lower level, on the other hand, assumes a fixed mode in \eqref{eq:sensing_model} given by the higher level algorithm and through the solution of an SDP produces a trajectory that minimizes the steady state uncertainty.

\subsection{Lower Level Problem}
Recalling \eqref{eq:sensing_model}, in order to simplify notation for the rest of this subsection, we define $G_i=H_{i,\max}^TR_iH_{i,\max}$ and its augmented version $\tilde{G}_i=diag(G_i,...,G_i)$. Note that $H_{i,k}^TR_iH_{i,k}=\gamma_i(k)G_i$ and in the optimization $\gamma_i(k)$ will be treated as a decision variable and $G_i$ as a constant. Therefore,
\begin{equation}
    \tilde{H}_{i,k}^T\tilde{R}_i\tilde{H}_{i,k}=\underbrace{\begin{bmatrix}
    \gamma_i(k)I_{L_i\times L_i} &\cdots & 0\\
    \vdots & \ddots & \vdots\\
    0 & \cdots & \gamma_i(k+\tau)I_{L_i\times L_i}
    \end{bmatrix}}_{\tilde{\gamma}_{i,k}}\tilde{G}.
\end{equation}

Moreover, we create optimization variables $d_i^2$  such that $d_{i}^2(k) \geq \norm{s(k)-x_i}^2$. The underlying goal of creating this variable is that the constraint will be binding in an optimal solution, i.e., $d_{i}^2(k) = \norm{s(k)-x_i}^2$, therefore we can compute $\gamma_{i}(k)$ using \eqref{eq:sensing_model} once $d_i^2(k)$ is fixed. 

Finally, we define logical variables $b_{i,k}\in\{0,1\}$ (that will be fixed pre-defined variables in the optimization) as
\begin{equation}
    b_{i,k}=\begin{cases}
        0,& \norm{s(k)-x_i}>r_i,\\
        1,& \norm{s(k)-x_i}\leq r_i.
    \end{cases}
\end{equation}
These logical variables represent whether or not the agent visits a given target $i$ (i.e., the target within the agent's sensing range) at time step $k$ of the cycle and thus define the mode in \eqref{eq:sensing_model}. With that in mind, we state Prop. \ref{prop:entire_problem_sdp}.

\begin{proposition}
    \label{prop:entire_problem_sdp}
    For fixed values of cycle length $\tau$ and logical variables $b_{i,k}$, the solution of the optimization \eqref{eq:entire_optimization}, when it exists, minimizes the cost in \eqref{eq:one_period_infinity_cost}, and the optimal trajectory $s^*(\cdot)$ satisfies dynamic constraints \eqref{eq:agent_kinematics}.
    \begin{mini}
     {\Gamma_i,\Pi_i,s(\cdot)}{\frac{1}{\tau}\sum_{i=1}^N\text{tr}(\Gamma_i)}{}{\label{eq:entire_optimization}}
  \addConstraint{\begin{bmatrix}
    \tilde{Q}^{-1}_i-\Pi_i+\tilde{\gamma}_{i,k}\tilde{G}_i & \tilde{Q}_i^{-1}\tilde{A}_i\\
    \tilde{A}_i^T\tilde{Q}_i^{-1} & \Pi_i+\tilde{A}_i^T\tilde{Q}_i^{-1}\tilde{A}_i
    \end{bmatrix}\succcurlyeq 0}{}%
    \addConstraint{    \begin{bmatrix}
    \Gamma_i & I \\ I & \Pi_i
    \end{bmatrix} \succcurlyeq 0}{}
    \addConstraint{\Gamma_i,\ \Pi_i \succcurlyeq 0}{}
        \addConstraint{\norm{s(k+1)-s(k)}^2\leq u_{\max}^2}{}
    \addConstraint{\norm{s(\tau)-s(1)}^2\leq u_{\max}^2}{}
    \addConstraint{\norm{s(k+1)-x_i}^2\leq d_i^2(k)}{}
    \addConstraint{d_i^2(k)\geq r_i^2,&\ \text{if }b_{i,k}=0}{}
    \addConstraint{\gamma_{i}(k)=0,&\ \text{if }b_{i,k}=0}{}
    \addConstraint{d_i^2(k)\leq r_i^2,&\ \text{if }b_{i,k}=1}{}
    \addConstraint{\gamma_{i}(k)=1-\frac{d_i^2(k)}{r_i^2},&\ \text{if }b_{i,k}=1}{}
    \addConstraint{\forall i \in \{1,...,N\},\ \forall k\in\{1,...,\tau\}}.{}
\end{mini}
\end{proposition}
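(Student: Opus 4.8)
The plan is to exploit the layered structure of \eqref{eq:entire_optimization}: an inner problem in the information filter variables $\Gamma_i,\Pi_i$ for a frozen trajectory, wrapped inside an outer problem over $s(\cdot)$. First I would fix any $s(\cdot)$ feasible for \eqref{eq:entire_optimization} together with the induced values of the auxiliary variables $d_i^2(k)$ and $\gamma_i(k)$, and observe that the identity $\tilde{H}_{i,k}^T\tilde{R}_i\tilde{H}_{i,k}=\tilde{\gamma}_{i,k}\tilde{G}_i$ makes the first three families of constraints together with the objective coincide, target by target and for a fixed reference index $k$, with the SDP \eqref{eq:optimization_solve_riccati_equation} of Prop.~\ref{prop:exact_solution_information_filter}. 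The inner problem therefore decouples over $i$, and Prop.~\ref{prop:exact_solution_information_filter} applies, yielding at optimality $\Pi_i^*=\tilde{P}_{i,k}$, a solution of the augmented ARE \eqref{eq:information_filter_algebraic_equation}, and $\Gamma_i^*=(\Pi_i^*)^{-1}=\tilde{P}_{i,k}^{-1}$. Its observability hypothesis on $(\tilde{A}_i,\tilde{H}_{i,k})$ I would discharge with a short lemma: by Assumption~1 the pair $(A_i,H_{i,\max})$ is observable, and since the augmented system is the $\tau$-step lift of the periodic filter, a single step $k$ with $\gamma_i(k)>0$ (i.e. $b_{i,k}=1$ and the agent strictly inside range) already renders $(\tilde{A}_i,\tilde{H}_{i,k})$ observable; as $\gamma_i(k)>0$ for some $k$ is in any case necessary for the steady-state covariance of target $i$ to be finite, a target violating it makes \eqref{eq:one_period_infinity_cost} infinite and is excluded by the ``when it exists'' clause.

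Next I would convert the inner optimum into the cost \eqref{eq:one_period_infinity_cost}. Since $\tilde{P}_{i,k}=\mathrm{diag}\big(\bar{P}_i(k),\ldots,\bar{P}_i(k+\tau-1)\big)$, block-wise inversion gives $\tilde{P}_{i,k}^{-1}=\mathrm{diag}\big(\bar{\Sigma}_i(k),\ldots,\bar{\Sigma}_i(k+\tau-1)\big)$, so $\tr(\Gamma_i^*)=\sum_{j=0}^{\tau-1}\tr\big(\bar{\Sigma}_i(k+j)\big)=\sum_{j=1}^{\tau}\tr\big(\bar{\Sigma}_i(j)\big)$ by periodicity; summing over $i$ and dividing by $\tau$ shows the optimal objective of \eqref{eq:entire_optimization} for a given $s(\cdot)$ equals $C$ in \eqref{eq:one_period_infinity_cost}. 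It then remains to show that minimizing over $s(\cdot)$ reproduces the persistent monitoring cost. For this I would (i) eliminate $u_k$ from \eqref{eq:agent_kinematics} to see that $\norm{s(k+1)-s(k)}^2\le u_{\max}^2$ for $1\le k<\tau$ together with $\norm{s(\tau)-s(1)}^2\le u_{\max}^2$ are exactly the condition that $s(\cdot)$ be a dynamically feasible $\tau$-periodic trajectory; and (ii) argue that at an optimum the distance constraints relating $d_i^2(k)$ to $\norm{s(k)-x_i}^2$ bind and $\gamma_i(k)$ takes the value dictated by \eqref{eq:sensing_model} consistently with the prescribed pattern $b$. The tightness in (ii) is a monotonicity argument: raising $\gamma_i(k)$ enlarges $\tilde{\gamma}_{i,k}\tilde{G}_i$ with respect to $\succcurlyeq$, which only enlarges the feasible set of $\Pi_i$ in the first LMI of \eqref{eq:entire_optimization} and hence can only lower the optimal $\tr(\Gamma_i)$, so a minimizer never gains from slack $d_i^2(k)>\norm{s(k)-x_i}^2$ when $b_{i,k}=1$, while when $b_{i,k}=0$ the value of $d_i^2(k)$ does not enter the cost at all. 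Combining (i), (ii) and the cost identity with the converse direction --- that any dynamically feasible $s(\cdot)$ realizing $b$ yields a feasible point of \eqref{eq:entire_optimization} with objective $C(s)$ --- gives the claim.

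I expect the monotonicity step to be the main obstacle, together with pinning down the precise sense of ``minimizes the cost in \eqref{eq:one_period_infinity_cost}''. One must make rigorous that the optimal value of the per-target SDP is a weakly monotone function of the information matrix $\tilde{\gamma}_{i,k}\tilde{G}_i$ --- most cleanly by noting that a larger information term makes the relaxed Riccati inequality \eqref{eq:information_filter_inequality_algebraic_equation} hold for a $\succcurlyeq$-larger set of $\Pi_i$ (equivalently, that the stabilizing solution of \eqref{eq:information_filter_algebraic_equation} is monotone in $\tilde{H}_{i,k}^T\tilde{R}_i\tilde{H}_{i,k}$, the Riccati operator being monotone and concave) --- and then combining this with the fact that a $\succcurlyeq$-larger $\Pi_i$ yields a smaller $\tr(\Gamma_i)$. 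A secondary point worth stating explicitly is that, when $b_{i,k}=0$, the formulation conservatively forces $\gamma_i(k)=0$ even if the agent happens to lie within $r_i$ of target $i$; hence \eqref{eq:entire_optimization} returns the best trajectory \emph{among those realizing the prescribed pattern} $b$, which is exactly why the higher level search over the $b_{i,k}$ is needed and why no global optimality is claimed. Everything else --- the decoupling over $i$, the block-inversion identity, and the elimination of $u_k$ --- is routine once Prop.~\ref{prop:exact_solution_information_filter} is in hand.
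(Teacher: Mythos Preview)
Your proposal is correct and follows the same two main steps as the paper's sketch --- showing that the first three constraints recover Prop.~\ref{prop:exact_solution_information_filter} and that the distance relaxation $\norm{s(\cdot)-x_i}^2\le d_i^2(\cdot)$ binds at optimum --- but you argue the latter differently. The paper invokes complementary slackness on the full SDP \eqref{eq:entire_optimization} to force $d_i^2(k)=\norm{s(k)-x_i}^2$ when $b_{i,k}=1$, mirroring the duality machinery of Appendix~1. You instead use a primal monotonicity argument: a larger $\tilde{\gamma}_{i,k}\tilde{G}_i$ enlarges the feasible set for $\Pi_i$ in the first LMI and hence cannot increase $\min\,\tr(\Gamma_i)$, so the minimizer never keeps $d_i^2(k)$ strictly above $\norm{s(k)-x_i}^2$. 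This is more elementary --- it avoids writing down the dual of the larger program --- and your layered inner/outer decomposition lets you invoke Prop.~\ref{prop:exact_solution_information_filter} per target as a black box rather than re-running its argument inside the joint SDP. Your proposal is also more careful than the paper's sketch on two points it omits: you verify that the observability hypothesis of Prop.~\ref{prop:exact_solution_information_filter} carries over to the augmented pair $(\tilde{A}_i,\tilde{H}_{i,k})$ once some $\gamma_i(k)>0$, and you spell out that for $b_{i,k}=0$ the formulation forces $\gamma_i(k)=0$ regardless of the agent's true position, so \eqref{eq:entire_optimization} optimizes only over trajectories realizing the prescribed pattern $b$ --- exactly the caveat motivating the higher-level search.
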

\begin{proof}
We only provide a sketch of the proof due to space limitations and its similarity to Prop. \ref{prop:exact_solution_information_filter}. Using complementary slackness, we conclude that, if the optimal solution of SDP \eqref{eq:entire_optimization} is bounded, then when $b_{i,k}=1$, in the optimal solution $d_i^2(k)=|s(k)-x_i|_2^2$ and therefore $\gamma_i(k)=1-(|s(k)-x_i|_2^2)/r_i^2$. Moreover, using the same arguments as in Prop. \ref{prop:exact_solution_information_filter}, we conclude that $\Pi_i$ in an optimal solution of each feasible subproblem is a solution of the information filter Ricatti equation \eqref{eq:information_filter_algebraic_equation} and that $\Gamma_i=\Pi_i^{-1}$, which means that $\Gamma_i$ is the covariance matrix. Note also that the optimization objective ensures minimization of the infinite horizon cost defined as in \eqref{eq:one_period_infinity_cost} and the constraints ensure that feasible trajectories are periodic and satisfy dynamics constraints \eqref{eq:agent_kinematics}.
\end{proof}
    Some brief insights on \eqref{eq:entire_optimization} are that the first three constraints are used in the solution of the ARE, similar to Prop. \ref{prop:exact_solution_information_filter}. The following two constraints {ensure that the agent movement is feasible according to \eqref{eq:agent_kinematics} and that it is periodic}. The next one is used to compute the distance between the agent and the target and the last four constraints compute $\gamma_i$ based on the relative position of the agent and the target. We also note that constraints that involve the squared norm can be transformed into linear matrix inequalities using the Schur complement, therefore the optimization \eqref{eq:entire_optimization} can be cast as an SDP.


\subsection{Higher Level Problem}
Using the optimization problem \eqref{eq:entire_optimization}, we have a procedure such that, for each cycle period $\tau$, we can solve an exponentially growing ($2^{N \times \tau}$) number of SDPs, representing different variations of $b_{i,k}$, and obtain the optimal solution for that cycle period. This approach is very inefficient due to the exponential scalability. We thus propose a graph-based scheme that explores different combinations of $\tau$ and $b_{i,k}$ by exploring how targets are spatially distributed and evaluates each combination using the low level optimization \eqref{eq:entire_optimization}.

As a motivation for the higher level algorithm we propose, consider the case where two targets are far enough apart that 
the agent seeing one target at a time instant cannot see the other one in the following time step due to the constraints in the dynamics. In a ``blind" exploration of variables $b_{i,k}$, one could encode the possibility of the agent visiting these targets at consecutive time steps and such choice of $b_{i,k}$ renders an infeasible solution of \eqref{eq:entire_optimization}. 
Moreover, trajectories that are dynamically feasible but have one target that is not visited may lead to unbounded cost and therefore we also do not want to explore them. The goal of the algorithm we introduce in this section is to evaluate only sets of $\tau$ and $b_{i,k}$ that can produce feasible trajectories according to the dynamics and the constraints as in \eqref{eq:agent_kinematics} and also lead to bounded steady state uncertainty. The algorithm we present in this subsection is a ``brute force'' approach and we plan to explore more efficient exploration schemes in future work.

We abstract the targets as nodes in a graph $\mathcal{G}$. The goal is to find a sequence of nodes to be visited. The cost $\xi(i,h)$ of each edge $(i,h)$ in the graph $\mathcal{G}$ is the minimum number of time steps necessary for the agent to transition between visiting these two targets, i. e.,
\begin{equation}
    \label{eq:def_edge_cost}
    \xi(i,h)=\max\left(1,\left\lceil \frac{\norm{x_i-x_h}-r_i-r_h}{u_{\max} }\right\rceil\right).
\end{equation}
We point out that visiting a target means being located within its sensing range and not necessarily being exactly at its center. This is the reason why we subtract the radius in the numerator in \eqref{eq:def_edge_cost}. Also, note that if an agent is visiting a given target, it can visit the same target in the following time step. Therefore, the self-transition cost is such that $\xi_{i,i}=1$, for any target. Given this structure, we can directly translate any sequence of visited nodes $\mathcal S=\{n_1,...,n_F\}$ to the number of time steps in a cycle, $\tau$, and to $b_{i,j}$, where
\begin{equation}
\label{eq:compute_length_from_visitation_sequence}
\tau(\mathcal{S}) = \xi(n_F,n_1)+\sum_{j=1}^{F-1}\xi(n_j,n_{j+1}).
\end{equation}
For example, given two nodes in a 2D environment at positions $x_1=(0,0)$ and $x_2=(0,1)$ with $u_{\max}=0.3$ and $r_j=0.3$, then $\xi(1,2)=\xi(2,1)=2$ and $\xi(1,1)=\xi(2,2)=1$. Picking the visiting sequence $\mathcal{S}=\{1,1,2\}$ as an example, its correspondent timestep sequence is $\tau(\mathcal{S})=\xi_{2,1}+\xi_{1,1}+\xi_{1,2}=5$ and $(b_{1,1},...,b_{1,5})=(1,1,0,0,0)$ and $(b_{2,1},...,b_{2,5})=(0,0,0,1,0)$. 

We then propose Algorithm \ref{alg:agents_optimization}, which combines both low and high level stages and we name this {\it SDP-PM}.
The intuition behind this algorithm is that, initially, all the cycles in which each target is visited for exactly one time step are added to a list and ordered according to the number of time steps in that particular cycle. Then, these cycles are explored in order. In the exploration, the cost of that particular cycle is evaluated and all the possibilities of visiting one new (or the same) targets in that cycle are added to the list $\mathcal{L}$.
One thing to note is that the first cycle to be explored will always be the traveling salesman optimal solution and ``simpler" cycles will always be explored first. {We highlight that while Alg. \ref{alg:agents_optimization} provides a feasible solution in every iteration, it is not guaranteed to yield a globally optimal solution with a finite number of iterations ($N_{\text{iter}}$). This algorithm exhaustively searches feasible trajectories that spend the minimum possible time traveling between targets, and this search is heuristically guided in a way where shorter trajectories are explored first. If it was possible to run an infinite amount of iterations, it would theoretically find a globally optimal solution. However, in practical applications the number of iterations is a pre-determined constant $N_{\text{iter}}$ and the designed trajectory is possibly suboptimal.}

\begin{algorithm}
\caption{SDP-PM}
\label{alg:agents_optimization}
\begin{algorithmic}[1]
\State{$OptCost\leftarrow \infty$}
\State{$OptCycle\leftarrow \emptyset$}
\State{$\mathcal{L} \leftarrow \emptyset$}
\For{$\mathcal{S} \in permutation(1,...N)$}
\State Add $(S,\tau(\mathcal{S}))$ to $\mathcal{L}$.
\EndFor
\For{$i \in (1,...,N_{iter})$}
\State{$\mathcal{S}\leftarrow removeFirst(\mathcal{L})$}
\State{$cost=lowerLevelOptimization(\mathcal{S})$}
\If{$cost < OptCost$}
\State{$OptCost\leftarrow cost$}
\State{$OptCycle \leftarrow \mathcal{S}$}
\EndIf
\For{$newVertex\in(1,...,N)$}
\For{$p \in(2,...N_\mathcal{S})$}
\State $\mathcal{S}_{new}\leftarrow \{\mathcal{S}_{1:p-1},newVertex,\mathcal{S}_{p:F_\mathcal{S}}\}$
\State Add $(S_{new},\tau(\mathcal{S}_{new}))$ to $\mathcal{L}$
\EndFor
\EndFor
\EndFor
\State {\bf return} $OptCost,OptCycle$
\end{algorithmic}
\end{algorithm}

\begin{figure}[htp!]
    \centering
    \includegraphics[width=0.12\textwidth]{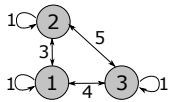}
    \caption{Example graph for illustrating Alg. \ref{alg:agents_optimization}.}
    \label{fig:graph}
\end{figure}

As an illustration of Alg. \ref{alg:agents_optimization}, consider the very simple graph shown in Fig. \ref{fig:graph}. The list is initialized with all the possible cyclic permutations (lines 4-5) of the targets, i.e., $\mathcal{L}=\{(\{1,2,3\},12),(\{1,3,2\},12)\}$, where each element of the list is composed by a sequence of targets and the cycle length in that sequence, computed as in \eqref{eq:compute_length_from_visitation_sequence}. Then, the first node on the list $(\{1,2,3\},12)$ is deleted from the list, its cost is evaluated, and all the cycles that can be constructed from this element by adding one extra visit are added to $\mathcal{L}$. Therefore, when exploring this first element, the cycles added to the list are $(\{1,1,2,3\},13)$, $(\{1,2,2,3\},13)$, $(\{1,3,2,3\},18)$, $(\{1,2,1,3\},14)$, $(\{1,2,2,3\},13)$, $(\{1,2,3,3\},13)$, $(\{1,2,3,1\},13)$, $(\{1,2,3,2\},16)$, and $(\{1,2,3,3\},13)$. However, the list $\mathcal{L}$ is ordered according to cycle length, which means that the next cycle to be explored would be the one with length 12, followed by any cycle with length 13.


\section{Simulation Results}
\label{sec:results}

We implemented the SDP-PM Alg. \ref{alg:agents_optimization} with dynamics as in \eqref{eq:model} with parameters
\begin{equation}
    A_i = \textrm{diag}(1.1,1.1),  \quad Q_i = \textrm{diag}(0.1,0.1),
\end{equation}
and observation model as in \eqref{eq:sensing_model} with parameters
\begin{equation}
    H_{\max} = \begin{bmatrix} \frac{1}{\sqrt{2}} & \frac{1}{\sqrt{2}} \\ -\frac{1}{\sqrt{2}} & \frac{1}{\sqrt{2}} \end{bmatrix}, \quad R_i = \textrm{diag}(1,1),\quad r_i=0.6.
\end{equation}
The agent maximum displacement in one time step is bounded by $u_{\max}=0.33.$ The SDP optimization was implemented in MATLAB using YALMIP \cite{Lofberg2004} and solved using MOSEK \cite{mosek}.

To the best of the authors' knowledge, the only approaches proposed in the scientific literature similar enough to be used as a comparison are $RRC$ and its variant $RRC^*$ \cite{lan2016rapidly}. Note that, in its original form, $RRC$ considers a different definition of $\gamma_i(k)$ than Eq. \eqref{eq:sensing_model} and a cost different from \eqref{eq:cost}. However, $RRC$ can be trivially modified to match our definitions. We implemented and compared it to our approach (SDP-PM) in a simple environment and the results are shown in Fig. \ref{fig:results_3_target}. Due to the random nature of RRCs, we ran it 5 independent times and we show its best, worst and average performances.

\begin{figure*}[htp!]
    \centering
    \begin{subfigure}{0.28\textwidth}
        \centering\includegraphics[width=0.8\textwidth]{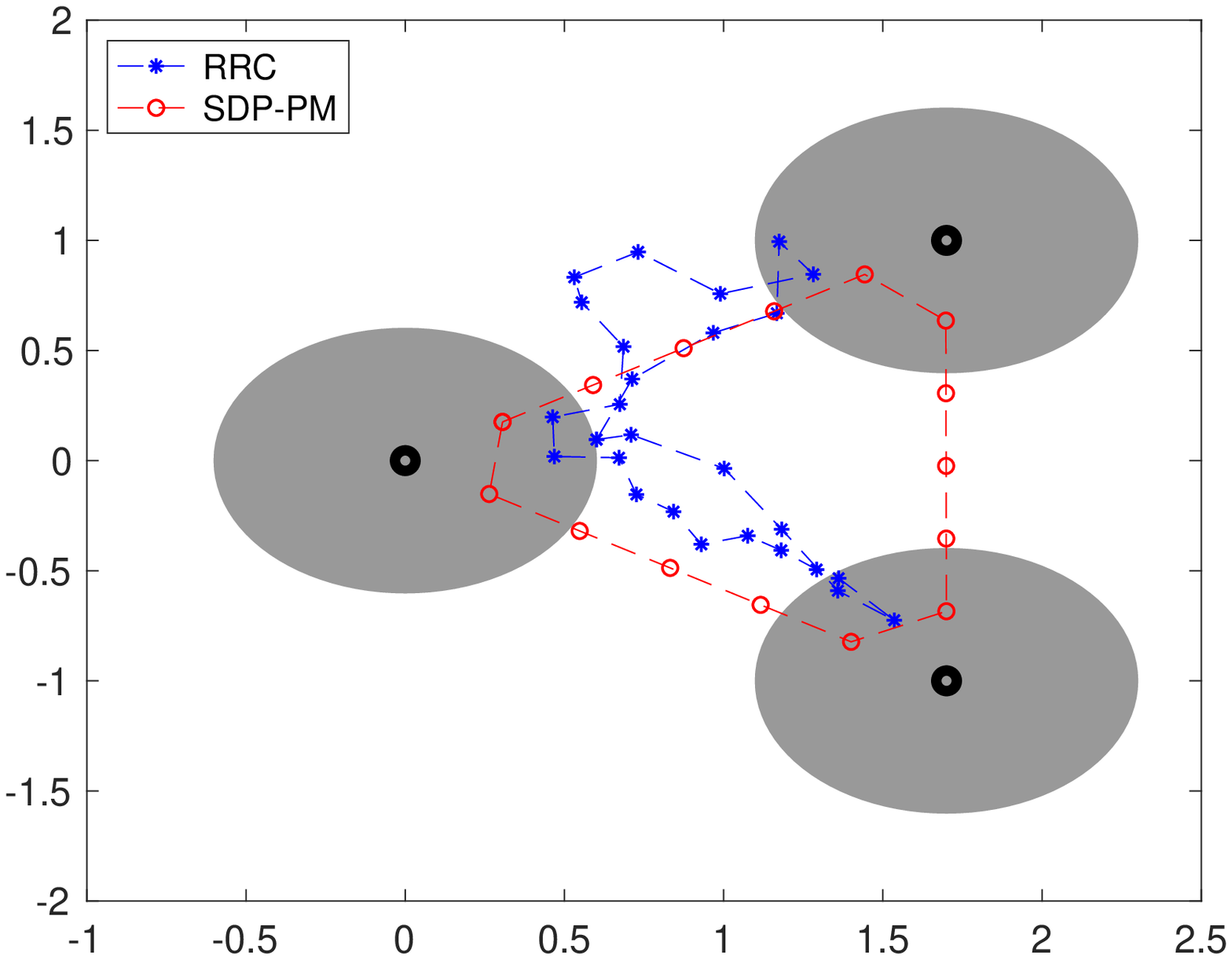}
        \caption{Trajectories using the SDP-PM and RRC algorithms.}
        \label{fig:trajs_comparison}
    \end{subfigure}
    \begin{subfigure}{0.32\textwidth}
        \centering\includegraphics[width=0.8\textwidth]{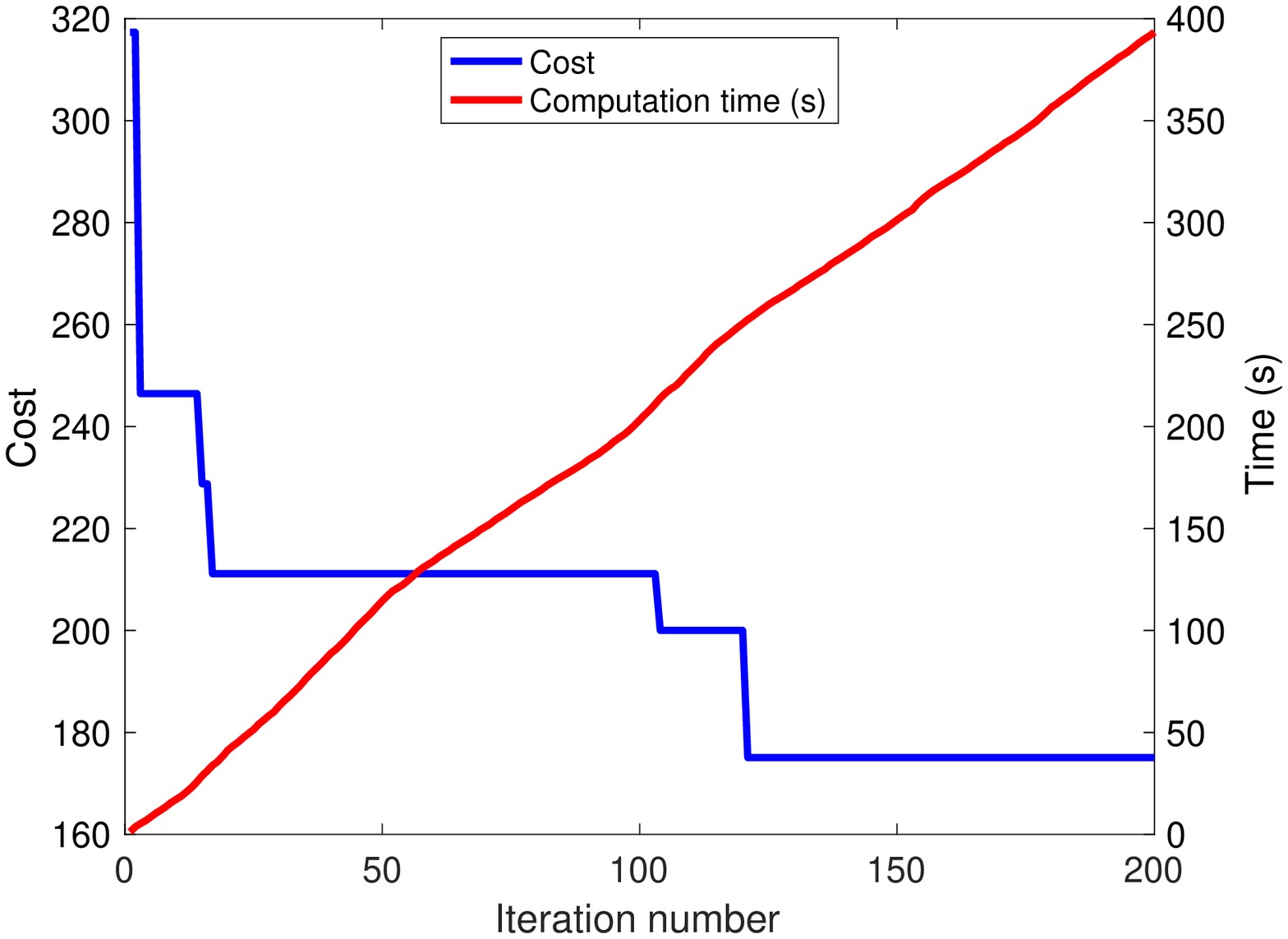}
        \caption{Cost and computation time for SDP-PM.}
        \label{fig:cost_sdp}
    \end{subfigure}
    \begin{subfigure}{0.35\textwidth}
        \centering\includegraphics[width=0.8\textwidth]{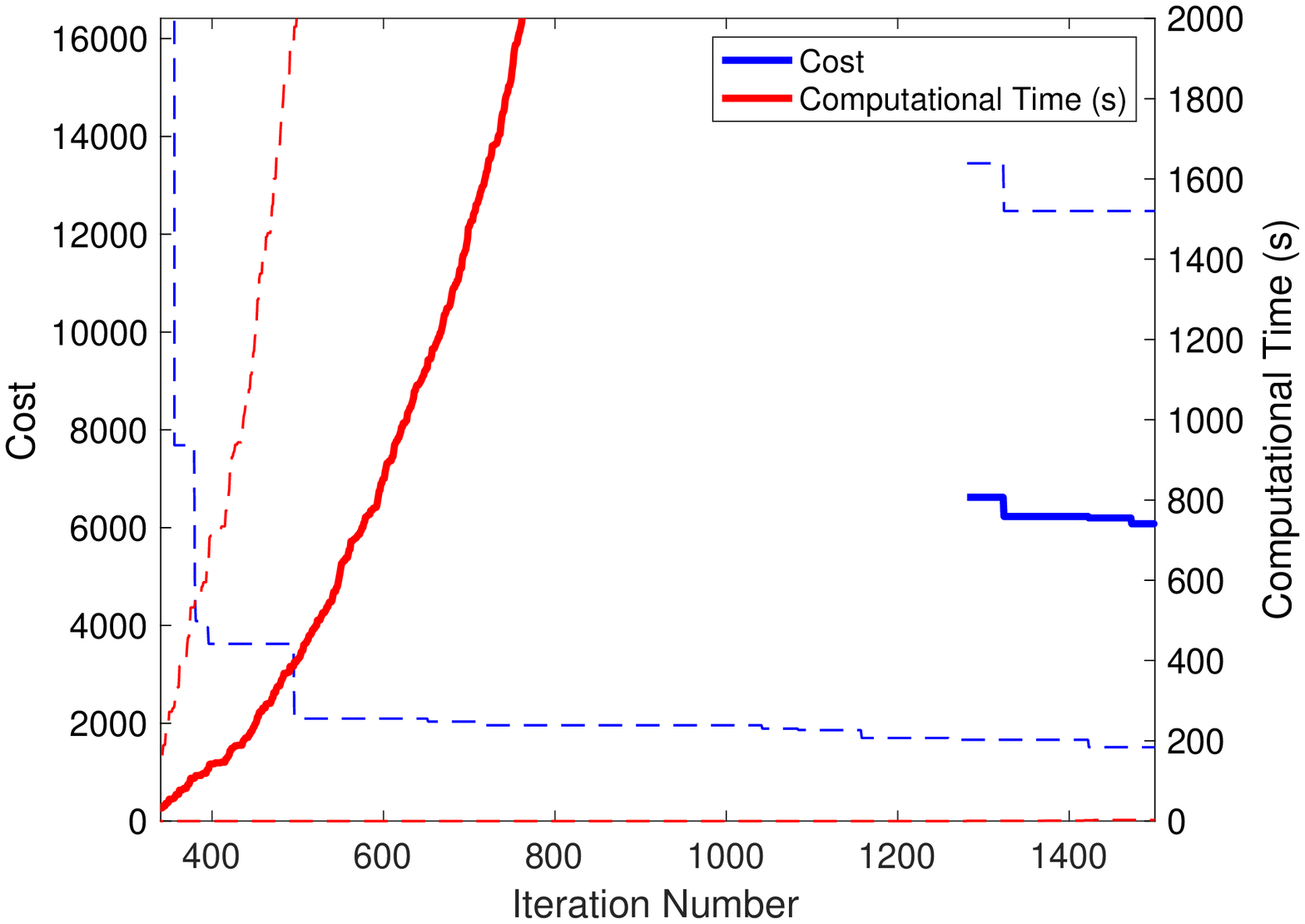}
        \caption{Cost and computation time for RRC.}
        \label{fig:cost_rrc}
    \end{subfigure}
    \caption{Results of the simulation with three targets. (a) Comparison of the trajectories generated by the {\it RRC} and the {\it SDP-PM} approaches. The trajectory displayed for RRC is the one with lowest cost among 5 independent runs of the algorithm. The presented trajectory was obtained after 200 iterations of the SDP-PM algorithm and 1500 iterations of {\it RRC}. The grey area represents the positions for which the agent can sense a given target. (b) Cost and cumulative computation time as a function of the iteration number for {\it SDP-PM}. (c) Cost and cumulative computation time as a function of the iterations of {\it RRC}. The solid lines represent average among 5 runs and the dashed lines are the observed maximum and minimum of the cost and computational time. None of the 5 instances of {\it RRC} found a feasible solution before 345 iterations.}
    \label{fig:results_3_target}
\end{figure*}

In Figs. \ref{fig:cost_sdp} and \ref{fig:cost_rrc}, each iteration of the {\it SDP-PM} consists in exploring one node (lines 6-14 of Alg. \ref{alg:agents_optimization}), while iterations of RRC consist of adding a node to the tree. When considering the computation time of RRC, in order to ensure a fair comparison, we only measured the time spent on solving the ARE, since it is the most computationally demanding part of RRC. The algebraic Ricatti equations were solved using the MATLAB $idare$ function. In the case of the {\it SDP-PM}, the computation time in Fig. \ref{fig:cost_sdp} corresponds to the time spent solving SDPs. Thus, to ensure a fair comparison, we are under reporting the cost of RRC.

From Fig. \ref{fig:trajs_comparison}, one can see that the trajectory produced by {\it SDP-PM} travels between targets in a straight line, while {\it RRC} does not. Also, when the agent visits a target in the {\it SDP-PM} trajectory, it always moves as close as possible to the center of the target (given total time steps and speed limitations), which does not happen in {\it RRC}. The reason is that, for fixed $\tau$ and logical variables $b_{i,k}$, the trajectory generated by {\it SDP-PM} is optimal, while {\it RRC} only has an asymptotic probabilistic notion of optimality, with no deterministic guarantees for a finite number of iterations. Moreover, Figs. \ref{fig:cost_sdp} and \ref{fig:cost_rrc} show that, for reasonable computation times, {\it SDP-PM} produces much better solutions in terms of cost. 
The solution found at the first iteration of {\it SDP-PM} has the cost equal to 16.8\% of the cost of the best (in terms of cost) of the 5 runs of {\it RRC} after 1500 iteration. Comparing our approach after 200 iterations and {\it RRC} after 1500, {\it SDP-PM} reports a cost of only 9\% of the best solution of {\it RRC}. We also note that {\it SDP-PM} produced a solution with bounded cost in its first iteration, while {\it RRC} took between 345 and 1305 iterations to find its first feasible solution, i.e. where the target covariances are bounded.

In order to illustrate the performance of our approach in a more complex setup, we ran {\it SDP-PM} in an environment with 7 targets with their centers $x_i$ randomly picked using a uniform distribution in $[0,4]\times[0,4]$. The systems parameters were the same as in the previous case, except that $r_{i,j}$ was set to 0.3. The results are displayed in Figs. \ref{fig:traj_7_targets} and \ref{fig:cost_7_targets}. We note that we ran {\it RRC} 5 times in this environment, with $10^4$ iterations in each trial, and in none of these did {\it RRC} find a feasible solution. By analyzing {\it SDP-PM} results in this more complex environment, we can see that similar to the simpler environment, {\it SDP-PM} finds a feasible solution very fast and refines it within the first few iterations. One interesting aspect of the trajectory generated is that the agent visits some targets for non-consecutive times. This highlights the fact that the approach we propose here does not only locally search around an initial trajectory we input (in this case, initially the TSP solution is the first to be evaluated, where each target is visited once), but also is able to explore trajectories that have major changes in the visiting order compared to the initial exploration schedule. This gives rise to more complex behaviors, such as some targets being explored once, other targets at multiple consecutive time steps, and also targets being visited multiple times but at non consecutive instants. This illustrates the ability of {\it SDP-PM} of introducing a global notion in the search of persistent monitoring schedules.

\begin{figure}[htp!]
    \centering
    \begin{subfigure}{0.22\textwidth}
        \centering \includegraphics[width=\textwidth]{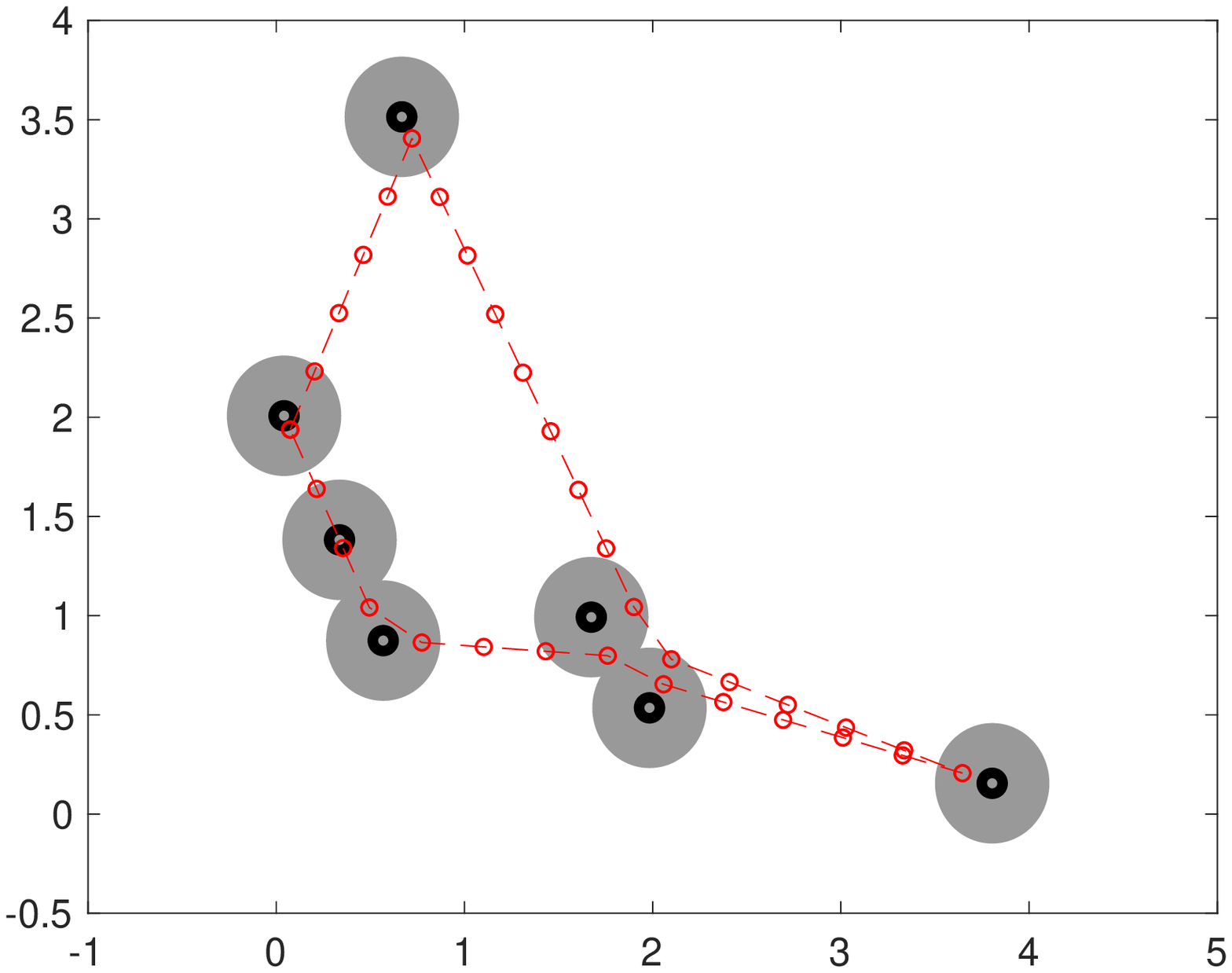}%
        \caption{SDP-PM trajectories.}
        \label{fig:traj_7_targets}
    \end{subfigure}
    \begin{subfigure}{0.25\textwidth}
        \includegraphics[width=\textwidth]{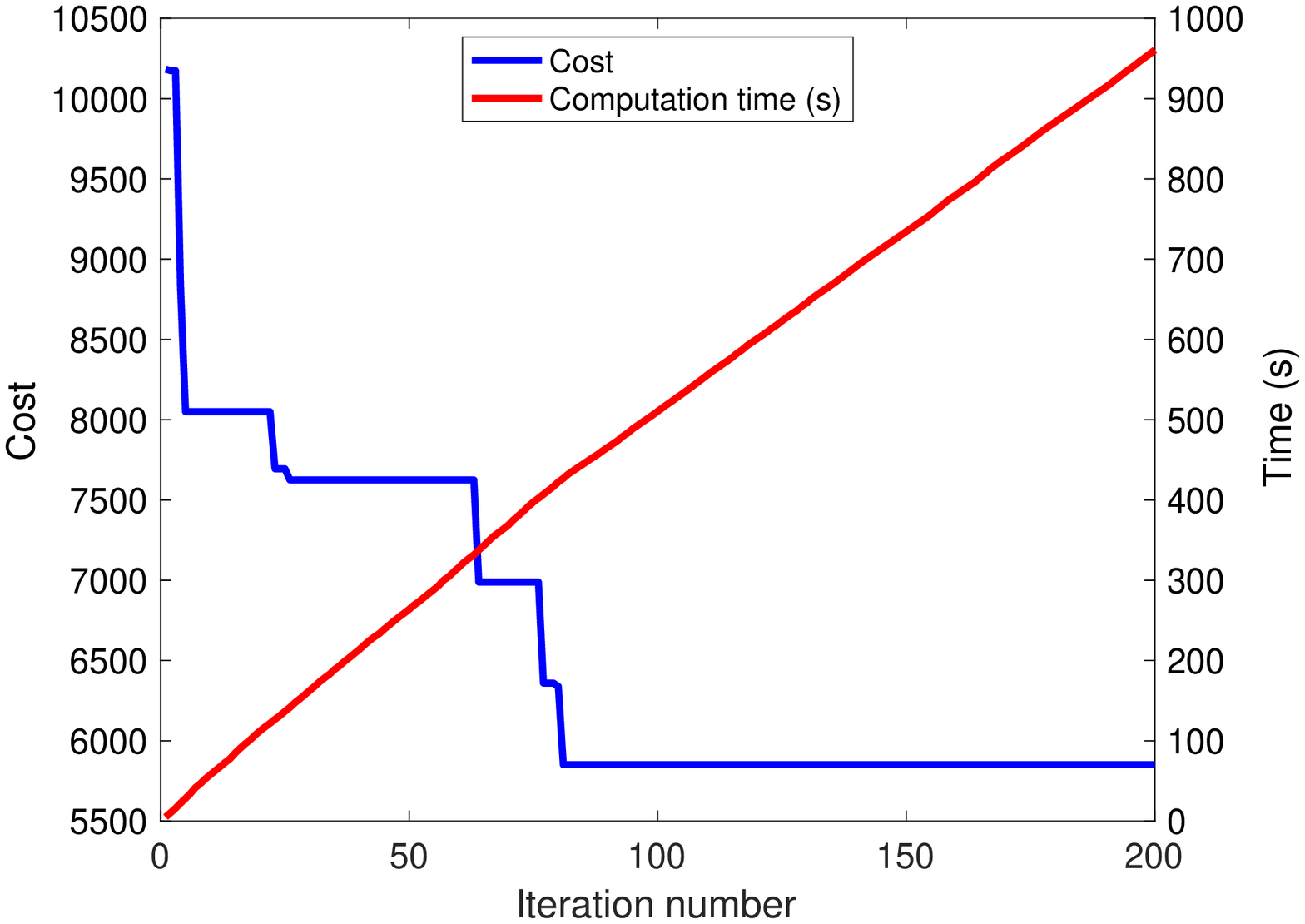}
        \caption{Cost and computation time.}
        \label{fig:cost_7_targets}
    \end{subfigure}
    \caption{The trajectory of the agent after 200 iterations of SDP-PM at more complex environment is given in (a). The circular red marks represent the positions where the agent was at the discrete time steps. The evolution of cost and computation time are shown in (b).}
    
\end{figure}

We note that SDP solvers might present numerical issues as the problem grows, both in terms of number of targets and time steps. From our experience, in trajectories with a number of time steps beyond 100, numerical issues arise, even for a small number of target (4 targets, for instance).  In these cases, a gap between the cost of the primal and dual solutions of the SDP is observed, even after the interior point algorithm converges. RRC does not have these same numerical issues, since it only solves AREs, not SDPs. 

\section{Conclusion and Future Work}
\label{sec:conclusion}

In this paper, we have presented an approach to compute persistent monitoring trajectories for systems that evolve in discrete time, with linear, stochastic dynamics. Given the optimal estimator (the Kalman Filter), we showed that the cost can be jointly optimized with the position of the agents, in a way that minimizes the infinite horizon cost. The simulation results have shown that our approach efficiently solves small to moderately sized problems and that it significantly improves the state of the art, both in terms of computational cost and performance. It is also not restricted to a local optimization, but also explores very diverse trajectories.

For future work, we intend to extend the proposed technique to multi-agent systems and plan to analyze ways to overcome the numerical issues with the SDP that arise when the size of the problem grows. We also want to improve the higher level search algorithm, enhancing the way we explore the candidate optimal cycles, possibly by greedily exploring from an initial candidate schedule. Finally, we plan to investigate the connections between the discrete time persistent monitoring and its continuous time version, explored in \cite{pinto2020monitoring} and possibly combine them, trying to benefit from the global notion of search of {\it SDP-PM} and the trajectory refinement and scalability of our previous work.
\bibliographystyle{IEEEtran}
\bibliography{references.bib}
\appendices
\section{Proof of Proposition \ref{prop:exact_solution_information_filter}}
First we show that Slater's condition \cite{balakrishnan1995connections} is satisfied on the dual problem. The dual of this SDP is:
\begin{mini}
     {Z,Y}{\text{tr}\left(Z\mathcal{M}+Y\begin{bmatrix}
    \Gamma_i & 0 \\ 0 & \Pi_i
    \end{bmatrix}\right)}{}{\label{eq:dual_optimization_solve_riccati_equation}}
  \addConstraint{-Z_{11}+Z_{22}+Y_{22}= 0}{}%
    \addConstraint{ Y_{11} = I }{}
    \addConstraint{Z,\ Y \succcurlyeq 0,}{}
\end{mini}

where
\begin{equation*}
    Z = \begin{bmatrix} Z_{11} & Z_{12}\\Z_{12}^T&Z_{22}\end{bmatrix},\  Y = \begin{bmatrix} Y_{11} & Y_{12}\\Y_{12}^T&Y_{22}\end{bmatrix},
\end{equation*}
\begin{equation*}
    \mathcal{M}=\begin{bmatrix}
    \tilde{Q}^{-1}_i+\tilde{H}^T_{i,k}\tilde{R}_i\tilde{H}_{i,k} & \tilde{Q}_i^{-1}\tilde{A}_i\\
    \tilde{A}_i^T\tilde{Q}_i^{-1} & \tilde{A}_i^T\tilde{Q}_i^{-1}\tilde{A}_i
    \end{bmatrix}.
\end{equation*}
If we pick $Y_{12}=0$, $Y_{11}=0$ and $Z_{11}\succ Z_{22} \succ 0$, then $Z\succ 0$ and $Y\succ 0$ (i.e. strictly positive definite), and $Y$ and $Z$ are feasible, therefore the dual is strictly feasible. On top of that, given that the system is observable and $Q_i$ and $R_i$ are full rank, then Eq. \eqref{eq:information_filter_algebraic_equation} has a unique positive definite solution. Therefore, the primal is feasible since $\Pi_i=\tilde{P}_{i,k}^{-1}$ is a solution of the primal. Thus, strong duality and complementary slackness hold. Now, using complementary slackness, we know that in an optimal solution,
\begin{equation}
    \begin{bmatrix} Y_{11}^* & Y_{12}^*\\(Y_{12}^*)^T&Y_{22}\end{bmatrix}\begin{bmatrix}
    \Gamma_i^* & I \\ I & \Pi_i^*
    \end{bmatrix} = 0,
\end{equation}
which implies that, given that $Y_{11}^*=I$, $\Gamma_i^*=(\Pi_i^*)^{-1} \succ 0$ and that $Y_{22}^*=(\Gamma_i^*)^T\Gamma_i^*$. Therefore, since $Z_{11}=Z_{22}+Y_{22}$, we know that $Z_{11}^*\succ 0$.
Without loss of generality, Z can be expressed as:
\begin{equation*}
    Z=\begin{bmatrix} I \\ K^T\end{bmatrix}Z_{11}\begin{bmatrix} I & K\end{bmatrix}
\end{equation*}
and, given that $Z_{11}^*\succ 0$ complementary slackness also implies that
\begin{equation}
    \begin{bmatrix} I & K^*\end{bmatrix}\begin{bmatrix}
    \tilde{Q}^{-1}_i-\Pi_i^*+\tilde{H}^T_{i,k}\tilde{R}_i\tilde{H}_{i,k} & \tilde{Q}_i^{-1}\tilde{A}_i\\
    \tilde{A}_i^T\tilde{Q}_i^{-1} & \Pi_i^*+\tilde{A}_i^T\tilde{Q}_i^{-1}\tilde{A}_i
    \end{bmatrix} = 0
\end{equation}
which yields 
\begin{multline}
    \label{eq:information_filter_algebraic_equation_2}
    \tilde{Q}^{-1}_i-\Pi_i^*+\tilde{H}^T_{i,k}\tilde{R}_i\tilde{H}_{i,k}\\-\tilde{Q}_i^{-1}\tilde{A}_i(\Pi_{i,k}^*+\tilde{A}_i^T\tilde{Q}_i^{-1}\tilde{A}_i)^{-1}\tilde{A}_i^T\tilde{Q}_i^{-1}=0.
\end{multline}
\end{document}